\documentclass[preprint, amsmath, amssymb, aps, physrev]{revtex4-2}

\usepackage{graphicx}
\usepackage{dcolumn}
\usepackage{bm}
\usepackage{booktabs}
\usepackage{multirow}
\usepackage{amsthm}
\newtheorem{proposition}{Proposition}
\newtheorem{corollary}{Corollary}
\usepackage[colorlinks=true,linkcolor=blue,citecolor=blue,urlcolor=blue]{hyperref}

\newcommand{\ket}[1]{|#1\rangle}
\newcommand{\bra}[1]{\langle #1|}
\newcommand{\Eph}{E_{\mathrm{ph}}}
\newcommand{\dc}{\delta_{\mathrm{c}}}
\newcommand{\dl}{\delta_{\mathrm{l}}}
\newcommand{\Nbe}{N_{\mathrm{be}}}
\newcommand{\qer}{q_{\mathrm{er}}}
\newcommand{\Wext}{W_{\mathrm{ext}}}
\newcommand{\Wid}{W_{\mathrm{ideal}}}
\newcommand{\Wirr}{W_{\mathrm{irr}}}

\begin{document}

\preprint{APS/123-QED}

\title{\textbf{One Bit of Collective Information Is Worth $N\ln 2$ Bits of Local
Information in a Many-Body Quantum Battery}}

\author{Akoramurthy B}
\affiliation{Department of Computer Science and Engineering, National Institute
of Technology Puducherry, Karaikal 609609, India}

\author{Surendiran B}
\affiliation{Department of Computer Science and Engineering, National Institute
of Technology Puducherry, Karaikal 609609, India}

\author{Xiaochun Cheng}
\email{Corresponding author: xiaochun.cheng@swansea.ac.uk}
\affiliation{Computer Science Department, Swansea University, Wales, SA1 8EN,
United Kingdom}

\date{\today}

\begin{abstract}
Charging a quantum battery through a collective non-adiabatic stroke stores
energy in a shared bosonic mode, but a large part of it is locked in
correlations between the mode and the collective spin and is inaccessible to any
cyclic unitary acting on the mode alone. Such correlation-locked energy can be
released by a demon holding one bit, and the relevant figure of merit is the
daemonic ergotropy. We ask what that bit is worth and, crucially, whether its
worth depends on \emph{where} it is obtained. Comparing two protocols under a
double control---matched stored energy \emph{and} matched information, with both
arms constrained to a balanced two-outcome measurement carrying exactly one
bit---we find that one bit about the collective coordinate unlocks
$\delta_{\rm c}/\delta_{\rm l}=N\ln2$ times as much work as one bit about a
single ion. Across three stored-energy settings and $4\le N\le24$ the measured
exponent is $0.990\pm0.043$, and a double extrapolation ($1/N\to0$, then
$E_{\rm ph}\to0$) gives a prefactor $0.69298\pm0.00044$ against
$\ln2=0.693147$, an agreement of $0.02\%$; the alternatives $1/\sqrt2$ and $2/3$
are excluded at $2.0\%$ and $4.0\%$. We derive the linear scaling: to leading
order the daemonic gain equals $\nu\mu^{2}$ times the between-outcome variance
of $J_{x}$ resolved by the measurement, a relation we verify numerically to
$1.3\%$, and a balanced single-ion measurement resolves exactly $1/4$ of that
variance while a balanced collective split resolves $(\ln2/4)N$. The origin of
the prefactor is left open: the naive Gaussian median-split estimate gives
$1/2\pi$ and is excluded by $9\%$. We further find that one bit recovers a
constant fraction of the locked energy independent of $N$, and that one bit
returns roughly twice as much \emph{per bit} as a complete readout, so
informational returns diminish sharply after the first. We report the complete
cycle ledger and two cautionary results: unnormalised gain comparisons invert
the conclusion, and the break-even ion number does not collapse onto the Dicke
superradiant threshold when the mode frequency is varied.
\end{abstract}

\keywords{quantum battery; daemonic ergotropy; collective advantage; Landauer
erasure; Maxwell demon; trapped ions}

\maketitle

\section{Introduction}\label{sec:intro}

A quantum battery stores energy that a cyclic unitary can later
withdraw; the withdrawable part is the
ergotropy~\cite{Allahverdyan2004,Pusz1978,Alicki2013,Binder2015}. The field's
organising result is the collective charging advantage: $N$ cells driven through
a shared mode charge faster than $N$ independent
cells~\cite{Campaioli2017,Ferraro2018,Andolina2019,Campaioli2024,Gyhm2022}.

That advantage is \emph{kinetic}---it concerns rates. This paper concerns a
different question. Driving an ensemble non-adiabatically through a shared mode
excites the mode strongly, but the same drive correlates the mode with the
collective spin, and energy held in those correlations is invisible to any
operation on the mode alone. In the regime we study, $59\%$ of the stored mode
energy is inert at $N=12$. Collective charging buys energy density at the cost
of extractability.

Correlations of this kind are the resource of a Maxwell
demon~\cite{Landauer1961,Bennett1982,Sagawa2008,Parrondo2015,Goold2016}, and the
quantity measuring what a demon can do with them is the \emph{daemonic
ergotropy}~\cite{Francica2017}: the work extractable from a subsystem by an
observer holding the outcome of a measurement on its partner. The demon's memory
must be erased at $k_{B}T\ln2$ per bit, a cost that belongs in the
ledger~\cite{Berut2012,Koski2014,Proesmans2020,Miller2020,Faist2018}.

The question we pose is: \emph{what is one bit worth, and does its worth depend
on how the bit was obtained?}

The answer requires care about a trap. The raw daemonic gain grows steeply with
$N$---we measure $\delta_{1}\propto N^{4.4}$---but so does the stored energy,
which grows as $N^{4.9}$. Normalised by what was stored, the gain \emph{falls}
as $N^{-1.15}$. A raw growth rate therefore establishes nothing about a
many-body advantage; it can be produced simply by charging the battery harder.
Any claim of a genuine collective advantage must be made at \emph{matched stored
energy}.

When we impose that condition, a clean law emerges. Comparing one bit about the
collective coordinate against one bit about a single ion, at identical stored
energy and identical information cost, we find
\begin{equation}
\frac{\dc}{\dl}=N\ln 2
\label{eq:law}
\end{equation}
in the limit of weak charging and large $N$. This is scale-free, it is
controlled at both ends---matched energy and matched information---and it is the
central result of this paper.

Two points deserve emphasis at the outset. First, the comparison is made under a
\emph{double} control. Matching stored energy alone is not enough: the two
measurement families do not carry the same Shannon information at a common
energy, and a comparison that fixes only the energy conflates the value of
information with the amount of it. We therefore constrain both arms to a
balanced two-outcome measurement, $p=1/2$ and hence exactly one bit, at matched
$\Eph$. Second, the prefactor is not fitted: a double extrapolation
($1/N\to0$, then $\Eph\to0$) gives $0.69298\pm0.00044$ against
$\ln2=0.693147$.

\subsection{Contributions}

\begin{itemize}
\item[\textbf{C1}] \textbf{The $N\ln2$ law} (Sec.~\ref{sec:law}). Under matched
stored energy \emph{and} matched information (both arms balanced, exactly one
bit), one bit of collective information is worth $N\ln2$ bits of local
information: exponent $0.990\pm0.043$ over $4\le N\le24$, prefactor
$0.69298\pm0.00044$ against $\ln2=0.693147$.

\item[\textbf{C2}] \textbf{A derivation of the scaling, and an open prefactor}
(Prop.~\ref{prop:linear}). To leading order the daemonic gain is
$\nu\mu^{2}$ times the between-outcome variance of $J_{x}$ that the measurement
resolves---verified numerically to $1.3\%$---and a balanced single-ion
measurement resolves exactly $1/4$ of that variance. The linear scaling follows.
The prefactor does not: the naive Gaussian median-split estimate gives $1/2\pi$
and is excluded by $9\%$, so why the balanced collective split resolves
$(\ln2/4)N$ remains open.

\item[\textbf{C3}] \textbf{Size-independent demon efficiency}
(Sec.~\ref{sec:fraction}). One bit recovers a constant fraction of the locked
energy---$0.951\pm0.002$ at the weakest charging---independent of $N$. The
demon does not become less effective as the machine grows.

\item[\textbf{C4}] \textbf{An exact unlocking identity}
(Prop.~\ref{prop:complete}). Because the post-stroke joint state is pure, a
complete collective readout makes the entire mode energy extractable, fixing the
ceiling for partial readouts.

\item[\textbf{C5}] \textbf{Diminishing informational returns}
(Sec.~\ref{sec:baselines}). One bit returns roughly twice as much \emph{per bit}
as a complete readout, so the optimal information budget is small.

\item[\textbf{C6}] \textbf{Two cautionary results} (Sec.~\ref{sec:caution}). The
unnormalised gain is misleading, as above; and the break-even ion number, while
obeying $g_{0}\Nbe=2.165\pm0.008$ at fixed $\nu$, does not collapse onto the
superradiant threshold when $\nu$ is varied ($29\%$ scatter). We report both.

\item[\textbf{C7}] \textbf{A corrected account of the charging stroke}
(Sec.~\ref{sec:budget}). The stored mode energy exceeds the spin work deficit by
a factor $5.4$ at $N=12$; the balance comes from a polaron-like dressing shift,
not from non-adiabatic friction.
\end{itemize}

\section{Related Work}\label{sec:related}

\subsection{Collective charging}

Alicki and Fannes showed that entangling operations raise extractable work
beyond local operations~\cite{Alicki2013}; Binder \emph{et al.} made this a
charging-power statement~\cite{Binder2015}; Campaioli \emph{et al.} bounded the
collective speed-up~\cite{Campaioli2017}. The Dicke battery of Ferraro
\emph{et al.}~\cite{Ferraro2018} delivers a $\sqrt{N}$ power advantage; Andolina
\emph{et al.} attributed the scaling to collective coupling rather than
entanglement itself~\cite{Andolina2019}; Gyhm \emph{et al.} proved that extensive
advantage requires global operations~\cite{Gyhm2022}. Variants include
two-photon coupling~\cite{Crescente2020}, dissipative
charging~\cite{Barra2019,Carrasco2022}, capacity
bounds~\cite{JuliaFarre2020}, coherence-resolved
work~\cite{Shi2022,Francica2020} and adiabatic
protocols~\cite{Santos2019}; see~\cite{Campaioli2018,Campaioli2024} for reviews.

All of these concern rates of charging. None asks what a fixed information
budget is worth, which is the question here.

\subsection{Daemonic ergotropy and information thermodynamics}

Landauer~\cite{Landauer1961} and Bennett~\cite{Bennett1982} established the
price of logical irreversibility; Sagawa and Ueda generalised the second law to
quantum feedback~\cite{Sagawa2008}; Parrondo \emph{et al.} give the
synthesis~\cite{Parrondo2015}. Experiments span colloidal
particles~\cite{Berut2012,Toyabe2010}, single
electrons~\cite{Koski2014}, circuit QED~\cite{Cottet2017,Naghiloo2018} and
NMR~\cite{Camati2016}; measurement has been recognised as a
fuel~\cite{Elouard2017,ElouardJordan2018,Seah2020,Buffoni2019,Manzano2018}; and
finite-time erasure exceeds the quasi-static bound
quantifiably~\cite{Proesmans2020,Miller2020,Faist2018,Strasberg2017}.

Daemonic ergotropy was introduced by Francica
\emph{et al.}~\cite{Francica2017} for bipartite systems, where the gain is
controlled by the correlations between the parts. Our contribution is to ask how
the gain depends on \emph{which} degree of freedom the demon interrogates in a
many-body setting, and to find that the answer is a clean factor of $N$.

\subsection{Feedback-charged batteries and trapped ions}

Feedback charging has been studied with continuous
measurement~\cite{Mitchison2021}, in spin chains~\cite{Yao2022}, by sequential
measurement~\cite{Gherardini2020}, on superconducting
hardware~\cite{Hu2022} and with reinforcement learning~\cite{Erdman2022}; these
place the controller outside the register and leave the erasure cost implicit.
Our charging Hamiltonian is of the Dicke family~\cite{Dicke1954,Garraway2011},
whose superradiant transition is classic~\cite{Emary2003,Baumann2010}.
Collective spin--phonon coupling is standard in ion
traps~\cite{Molmer1999,Kim2010,Monroe2021,Leibfried2003,Bruzewicz2019}, which
have hosted a single-atom heat engine~\cite{Rossnagel2016}, an absorption
refrigerator~\cite{Maslennikov2019,Nimmrichter2017} and a spin engine with a
harmonic flywheel~\cite{Lindenfels2019}; see
also~\cite{Klatzow2019,Peterson2019}.

\begin{table}[t]
\caption{\label{tab:related}Positioning. ``Collective'' = many-body enhanced
coupling; ``Daemonic'' = conditional work extraction analysed; ``Fixed budget''
= results stated at fixed information cost; ``Matched energy'' = comparisons
controlled for stored energy.}
\begin{ruledtabular}
\begin{tabular}{lcccc}
Work & Collective & Daemonic & Fixed budget & Matched energy\\
\colrule
Alicki--Fannes~\cite{Alicki2013} & \checkmark & --- & --- & ---\\
Binder \emph{et al.}~\cite{Binder2015} & \checkmark & --- & --- & ---\\
Campaioli \emph{et al.}~\cite{Campaioli2017} & \checkmark & --- & --- & ---\\
Ferraro \emph{et al.}~\cite{Ferraro2018} & \checkmark & --- & --- & ---\\
Andolina \emph{et al.}~\cite{Andolina2019} & \checkmark & --- & --- & \checkmark\\
Gyhm \emph{et al.}~\cite{Gyhm2022} & \checkmark & --- & --- & ---\\
Francica \emph{et al.}~\cite{Francica2017} & --- & \checkmark & \checkmark & ---\\
Elouard \emph{et al.}~\cite{Elouard2017} & --- & \checkmark & --- & ---\\
Mitchison \emph{et al.}~\cite{Mitchison2021} & --- & \checkmark & --- & ---\\
Gherardini \emph{et al.}~\cite{Gherardini2020} & --- & \checkmark & --- & ---\\
\colrule
\textbf{This work} & \checkmark & \checkmark & \checkmark & \checkmark\\
\end{tabular}
\end{ruledtabular}
\end{table}

\section{Theoretical Framework}\label{sec:theory}

\subsection{Charging stroke}

We take $N$ two-level ions sharing one axial centre-of-mass mode of frequency
$\nu$, described in the permutation-symmetric sector by collective operators
$J_{\alpha}=\frac12\sum_{i}\sigma_{\alpha}^{(i)}$ with $J=N/2$. With
$\hbar=k_{B}=1$,
\begin{equation}
H(t)=\omega(t)J_{z}+\nu a^{\dagger}a+g_{0}\sqrt{N}\,J_{x}(a+a^{\dagger}),
\label{eq:H}
\end{equation}
the standard Dicke normalisation~\cite{Dicke1954,Garraway2011,Emary2003}, so the
collective coupling is $\lambda=g_{0}N$. The stroke ramps the splitting linearly,
$\omega(t)=\omega_{0}+(\omega_{f}-\omega_{0})t/\tau$, from
$\ket{\psi_{0}}=\ket{J,-J}\otimes\ket{0}$, evolving unitarily. The work
delivered is $\Wext=\langle H(0)\rangle_{\psi_{0}}-\langle
H(\tau)\rangle_{\psi_{\tau}}$, with frictionless reference
$\Wid=N(\omega_{f}-\omega_{0})/2$ and deficit $\Wirr=\Wid-\Wext$.

\subsection{Locked energy and daemonic ergotropy}

Let $\rho_{M}={\rm Tr}_{S}\ket{\psi_{\tau}}\bra{\psi_{\tau}}$ with mean energy
$\Eph=\nu\langle a^{\dagger}a\rangle_{\tau}$ and
ergotropy~\cite{Allahverdyan2004}
\begin{equation}
\mathcal{E}(\rho_{M})=\Eph-\sum_{k}r_{k}^{\downarrow}\varepsilon_{k}^{\uparrow},
\label{eq:ergo}
\end{equation}
eigenvalues descending against energies ascending. We call
$\Eph-\mathcal{E}$ the \emph{locked energy}.

For a projective measurement $\{\Pi_{j}\}$ on the spin with outcome
probabilities $p_{j}$ and conditional mode states $\rho_{j}$, the daemonic
ergotropy and gain are~\cite{Francica2017}
\begin{equation}
\mathcal{E}_{\rm d}=\sum_{j}p_{j}\mathcal{E}(\rho_{j}),\qquad
\delta=\mathcal{E}_{\rm d}-\mathcal{E}(\rho_{M})\ge0,
\label{eq:daemonic}
\end{equation}
non-negativity following from convexity of ergotropy under
$\rho_{M}=\sum_{j}p_{j}\rho_{j}$.

\begin{proposition}[Complete readout unlocks everything]\label{prop:complete}
If $\ket{\psi_{\tau}}$ is pure and $\{\Pi_{j}\}$ is a complete projective
measurement on the spin in any basis, then $\mathcal{E}_{\rm d}=\Eph$, i.e.\
$\delta^{\rm complete}=\Eph-\mathcal{E}$ equals the locked energy exactly.
\end{proposition}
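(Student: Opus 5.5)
The plan is to show that each conditional mode state $\rho_{j}$ is pure. For a pure state, ergotropy equals the full mean energy above the ground energy. Averaging over outcomes then reproduces $\Eph$.

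First, fix a complete projective measurement on the spin: a rank-one family $\Pi_{j}=\ket{\phi_{j}}\bra{\phi_{j}}$, with $\{\ket{\phi_{j}}\}$ an orthonormal basis of the spin space (the permutation-symmetric sector, or the full space; the argument is the same). Write $\ket{\psi_{\tau}}=\sum_{j}\ket{\phi_{j}}\otimes\ket{\chi_{j}}$, where $\ket{\chi_{j}}=(\bra{\phi_{j}}\otimes\mathbb{1})\ket{\psi_{\tau}}$ is an unnormalised mode vector. This expansion holds for any basis, which is why the statement says ``in any basis''. The outcome probabilities are $p_{j}=\langle\chi_{j}|\chi_{j}\rangle$, and for $p_{j}>0$ the conditional state is $\rho_{j}=\ket{\chi_{j}}\bra{\chi_{j}}/p_{j}$, which is pure. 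Outcomes with $p_{j}=0$ do not contribute to $\mathcal{E}_{\rm d}$ and can be discarded.

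Second, evaluate Eq.~(\ref{eq:ergo}) on a pure state. The spectrum of $\rho_{j}$ is $r_{1}^{\downarrow}=1$ with all other eigenvalues zero. The passive energy is therefore $\varepsilon_{1}^{\uparrow}$, the ground energy of the mode Hamiltonian $\nu a^{\dagger}a$, which is zero. Hence $\mathcal{E}(\rho_{j})=\nu\,{\rm Tr}(a^{\dagger}a\,\rho_{j})$. Third, sum over outcomes:
\begin{equation}
\mathcal{E}_{\rm d}=\sum_{j}p_{j}\,\nu\,{\rm Tr}(a^{\dagger}a\,\rho_{j})=\nu\,{\rm Tr}\Bigl(a^{\dagger}a\sum_{j}p_{j}\rho_{j}\Bigr)=\nu\,{\rm Tr}(a^{\dagger}a\,\rho_{M})=\Eph .
\end{equation}
The step $\sum_{j}p_{j}\rho_{j}=\rho_{M}$ holds because $\sum_{j}\Pi_{j}=\mathbb{1}$: a nonselective measurement on the spin does not change the reduced mode state. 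Substituting into Eq.~(\ref{eq:daemonic}) gives $\delta^{\rm complete}=\Eph-\mathcal{E}(\rho_{M})$, which is the locked energy by definition.

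No step is a real obstacle; the points needing care are conventions. The first is that the mode energy is measured as $\nu a^{\dagger}a$ with zero ground energy, consistent with $\Eph=\nu\langle a^{\dagger}a\rangle_{\tau}$, so that ``pure implies ergotropy equals energy'' holds with no offset. The second is that the ergotropy uses the bare mode Hamiltonian, not one dressed by the coupling. The third is that the mode Hilbert space is infinite-dimensional (or truncated in numerics), so the ordered sums in Eq.~(\ref{eq:ergo}) must be well defined. This is automatic because $\rho_{j}$ has rank one. Finally, it is worth noting explicitly that ``complete'' means rank-one projectors. A degenerate projective measurement would leave the $\rho_{j}$ mixed, and the equality could then fail.
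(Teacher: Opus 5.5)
Your proposal is correct and follows the paper's proof. You expand $\ket{\psi_{\tau}}$ in the measurement basis, note that each conditional mode state is pure so its passive energy is the zero ground energy of $\nu a^{\dagger}a$, and average the resulting mean energies to obtain $\Eph$; your explicit use of $\sum_{j}p_{j}\rho_{j}=\rho_{M}$ and your remark that ``complete'' must mean rank-one projectors usefully spell out steps the paper leaves implicit.
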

\begin{proof}
Write $\ket{\psi_{\tau}}=\sum_{j}\sqrt{p_{j}}\ket{j}_{S}\otimes\ket{\phi_{j}}_{M}$
in the measurement basis. Each $\rho_{j}=\ket{\phi_{j}}\bra{\phi_{j}}$ is pure,
so its eigenvalues are $\{1,0,\dots\}$ and its passive energy is
$1\cdot\varepsilon_{0}^{\uparrow}=0$. Hence
$\mathcal{E}(\rho_{j})=\bra{\phi_{j}}\nu a^{\dagger}a\ket{\phi_{j}}$ and
$\mathcal{E}_{\rm d}=\Eph$.
\end{proof}

\begin{corollary}[No record, no gain]\label{cor:record}
A channel that dephases the spin in the measurement basis but retains no record
gives $\delta=0$ exactly.
\end{corollary}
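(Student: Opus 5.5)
The plan is to make precise what ``dephases without a record'' means for the daemonic ergotropy, and then show that the gain collapses to zero because no conditioning is available.

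First I formalise the channel. A dephasing channel on the spin in the measurement basis $\{\Pi_{j}\}$ is
\[
\mathcal{D}(\rho)=\sum_{j}(\Pi_{j}\otimes\mathbb{1}_{M})\,\rho\,(\Pi_{j}\otimes\mathbb{1}_{M}).
\]
``No record'' means the demon is left holding only the single post-channel state $\mathcal{D}(\ket{\psi_{\tau}}\bra{\psi_{\tau}})$. It has no outcome label $j$. In the language of Eq.~\eqref{eq:daemonic}, this is a trivial measurement with one outcome, $p=1$, applied to the dephased state.

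Second, I compute the reduced mode state after the channel. Because the $\Pi_{j}$ are complete and orthogonal, cyclicity of the partial trace over $S$ gives
\[
{\rm Tr}_{S}\,\mathcal{D}(\rho)=\sum_{j}{\rm Tr}_{S}\big[(\Pi_{j}\otimes\mathbb{1})\rho\big]={\rm Tr}_{S}\rho=\rho_{M}.
\]
Equivalently, $\sum_{j}p_{j}\rho_{j}=\rho_{M}$. The mode marginal is therefore unchanged by any local operation on the spin.

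Third, I evaluate the daemonic quantities. With a single outcome, $\mathcal{E}_{\rm d}=1\cdot\mathcal{E}(\rho_{M})$, and hence $\delta=\mathcal{E}(\rho_{M})-\mathcal{E}(\rho_{M})=0$ exactly. This sharpens the inequality in Eq.~\eqref{eq:daemonic} to an equality. The gain in Prop.~\ref{prop:complete} came entirely from pure conditionals $\ket{\phi_{j}}$, and these are accessible only once the mixture $\sum_{j}p_{j}\rho_{j}$ is resolved by the label $j$. The argument uses neither purity of $\ket{\psi_{\tau}}$ nor completeness of $\{\Pi_{j}\}$ beyond $\sum_{j}\Pi_{j}=\mathbb{1}$, so it holds for any basis and any state.

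The only real obstacle is definitional. I must argue that daemonic ergotropy for an unrecorded channel has to be computed as a single-outcome ensemble. It would be wrong to assign the conditional states $\rho_{j}$, because feedback unitaries cannot depend on an unknown $j$. I would make this explicit by writing the optimal extraction as $\max_{U}$ over $j$-independent unitaries $U$ on $M$:
\[
\sum_{j}p_{j}\,{\rm Tr}\!\left[H_{M}\big(\rho_{j}-U\rho_{j}U^{\dagger}\big)\right]={\rm Tr}\!\left[H_{M}\big(\rho_{M}-U\rho_{M}U^{\dagger}\big)\right],
\]
where the equality is linearity of the trace. Its maximum is $\mathcal{E}(\rho_{M})$ by Eq.~\eqref{eq:ergo}.
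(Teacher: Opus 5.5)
Your proof is correct and follows the paper's argument: in both, the key step is that $\sum_j(\Pi_j\otimes I)\rho(\Pi_j\otimes I)$ has the same partial trace over $S$ as $\rho$, so $\rho_M$ and its ergotropy are unchanged and $\delta=0$. Your extra step, that without a record the feedback unitary cannot depend on $j$ and so the optimum is $\mathcal{E}(\rho_M)$ by linearity, spells out the definitional point that the paper leaves implicit.
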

\begin{proof}
$\rho\mapsto\sum_{j}(\Pi_{j}\otimes I)\rho(\Pi_{j}\otimes I)$ has the same
partial trace over $S$ as $\rho$, because the $\Pi_{j}$ are orthogonal
projectors summing to the identity. So $\rho_{M}$, and hence its ergotropy, is
unchanged.
\end{proof}

Corollary~\ref{cor:record} separates two things easily conflated: the gain comes
from the demon \emph{holding} the outcome, not from the disturbance the
measurement inflicts.

\subsection{The linear collective advantage}

Let $\dc$ denote the gain from a balanced two-outcome measurement of the
collective spin, and $\dl$ that from a balanced single-ion measurement, both
carrying exactly one bit. For a permutation-symmetric state the single-ion
projector acts inside the symmetric sector: writing the $N$-ion Dicke state with
$n$ excitations,
\begin{align}
\text{up:}&\quad\ket{N/2,n}\to\sqrt{n/N}\,\ket{(N{-}1)/2,\,n{-}1},\nonumber\\
\text{down:}&\quad\ket{N/2,n}\to\sqrt{(N{-}n)/N}\,\ket{(N{-}1)/2,\,n},
\label{eq:local}
\end{align}
so $\dl$ requires no excursion outside the symmetric subspace. To make the
comparison information-balanced we measure ion $1$ along
$\hat n=(\sin b,0,\cos b)$ and choose $b$ so that $p_\pm=1/2$, giving Kraus
operators $A_\pm=\cos(b/2)B_0\pm\sin(b/2)B_1$ (signs as in
Sec.~\ref{sec:methods}) that carry exactly one bit. The collective arm is
balanced in the same way, by rotating the measurement axis until a threshold
split gives $p=1/2$ exactly.

\begin{proposition}[Resolved-variance law and linear scaling]\label{prop:linear}
In the weak-charging regime the daemonic gain of a two-outcome measurement on
the spin is
\begin{equation}
\delta=\nu\mu^{2}\,\mathrm{Var}_{\rm bet}[J_{x}]
      +O(\mu^{4}),\qquad \mu\propto \kappa g_{0}\sqrt{N},
\label{eq:resolved}
\end{equation}
where $\mathrm{Var}_{\rm bet}[J_{x}]=\sum_j p_j(\langle J_x\rangle_j-\langle
J_x\rangle)^2$ is the between-outcome variance of $J_{x}$ resolved by the
measurement. A balanced single-ion measurement resolves
$\mathrm{Var}_{\rm bet}=1/4$ exactly, independent of $N$, so
$\dc/\dl=4\,\mathrm{Var}_{\rm bet}^{\rm coll}$ and the ratio is linear in $N$.
\end{proposition}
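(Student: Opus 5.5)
Working to leading order in $\mu$, I will model the post-stroke joint state as a displaced-oscillator state conditioned on $J_x$. In weak charging the coupling term $g_0\sqrt N J_x(a+a^\dagger)$ is first-order in $\kappa g_0\sqrt N$. Under a Magnus/polaron expansion over the stroke, the mode acquires a displacement proportional to $J_x$ (up to a phase that I absorb by redefining $a$), so
\begin{equation}
\ket{\psi_\tau}\simeq \sum_m c_m \ket{m}_x\otimes\ket{\mu m}+O(\mu^2),
\end{equation}
where $\ket{m}_x$ are $J_x$ eigenstates and $\ket{\alpha}$ denotes a coherent state. Here $\mu\propto\kappa g_0\sqrt N$, with $\kappa$ collecting the time integral of the ramp's response kernel. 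The $J_z$ dynamics during the ramp rotates the spin, but at leading order it only renormalises $\kappa$. I will state this as the modelling assumption behind ``weak-charging regime''.

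Next I compute the ergotropy of a mixture of coherent states with small amplitudes. For a two-outcome measurement with $p_j$ and conditional spin states, the conditional mode state is $\rho_j=\sum_m w^{(j)}_m\ket{\mu m}\bra{\mu m}$, plus coherences that are subleading in the relevant quantities. Take $\bar m_j=\langle J_x\rangle_j$. A displaced mixture $\rho_j=D(\mu\bar m_j)\,\sigma_j\,D^\dagger(\mu\bar m_j)$ has ergotropy at least $\nu\mu^2\bar m_j^2$, obtained by undoing the displacement. For $\sigma_j$ close to Gaussian with covariance set by $\mu^2\mathrm{Var}_j[J_x]$, the remaining ergotropy of $\sigma_j$ is $O(\mu^4)$. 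This holds because a thermal-like near-vacuum state of occupation $\bar n=O(\mu^2)$ has passive energy equal to its mean energy minus $O(\bar n^2)$, so its ergotropy is $O(\mu^4)$. Hence $\mathcal E(\rho_j)=\nu\mu^2\langle J_x\rangle_j^2+O(\mu^4)$, and likewise $\mathcal E(\rho_M)=\nu\mu^2\langle J_x\rangle^2+O(\mu^4)$. Subtracting and using $\sum_j p_j\langle J_x\rangle_j=\langle J_x\rangle$ gives
\begin{equation}
\delta=\nu\mu^2\Bigl(\sum_jp_j\langle J_x\rangle_j^2-\langle J_x\rangle^2\Bigr)=\nu\mu^2\,\mathrm{Var}_{\rm bet}[J_x]+O(\mu^4).
\end{equation}
The main obstacle is this step. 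I need to control the ergotropy of a non-Gaussian, small-amplitude mixture uniformly, including the fact that measurement back-action makes $\rho_j$ contain off-diagonal terms between different $m$. I would handle it by perturbation theory in the Fock basis. To $O(\mu^2)$ only the populations of $\ket0$ and $\ket1$ and the coherence $\rho_{01}$ matter. The ergotropy of a $2\times2$ block with populations $1-\epsilon,\epsilon$ and coherence $c$ is $\nu(|c|^2+O(\epsilon^2))$, and this coherence is exactly $\mu\langle J_x\rangle_j$.

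Finally I evaluate $\mathrm{Var}_{\rm bet}$ for the balanced single-ion measurement. Write $J_x=\tfrac12\sigma_x^{(1)}+J_x'$ and use permutation symmetry together with the weak-charging spin state, which is close to $\ket{J,-J}$. For the balanced projector along $\hat n$, Eq.~(\ref{eq:local}) and the Kraus operators $A_\pm$ give conditional means $\langle J_x\rangle_\pm=\langle J_x\rangle\pm\tfrac12$ at leading order. The shift comes entirely from ion 1, with the balancing angle $b$ chosen so that the outcome is fully correlated with $\sigma_x^{(1)}$ in the relevant sense; correlations with the other ions vanish at leading order in a product-like state. Then $\mathrm{Var}_{\rm bet}=\tfrac12(\tfrac12)^2\cdot2=\tfrac14$ exactly, independent of $N$. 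Dividing the two instances of the displayed formula at matched $\mu$ (matched $\Eph$) cancels $\nu\mu^2$ and gives $\dc/\dl=4\,\mathrm{Var}^{\rm coll}_{\rm bet}$. For the collective split, $\mathrm{Var}^{\rm coll}_{\rm bet}$ is a fixed fraction of $\mathrm{Var}[J_x]=N/4$ for the near-coherent spin state, because a balanced threshold split of a distribution with width $\sqrt{N}/2$ resolves a shape-determined fraction of its variance. This makes the ratio linear in $N$, without fixing the prefactor.
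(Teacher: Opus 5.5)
This is essentially the paper's proof: the same conditional-displacement ansatz, the same $\{\ket0,\ket1\}$-block computation, and the same product-state argument for the local arm. The paper reads the small eigenvalue off $\det\rho_j=\mu^2\mathrm{Var}_j[x]$ to get $\mathcal E(\rho_j)=\nu\mu^2\langle x\rangle_j^2$, which is your $\nu(|c|^2+O(\epsilon^2))$; the only cosmetic difference is that you keep $\langle J_x\rangle$ general, whereas the paper sets it to zero so that $\rho_M$ is passive.

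The one step you should state rather than gesture at (``fully correlated in the relevant sense'') is this: ion $1$ is in $\ket{\downarrow}$ at leading order, so $p_\pm=(1\mp\cos b)/2$ and balance forces $b=\pi/2$. That is a $\sigma_x^{(1)}$ measurement, which commutes with $J_x$ and fixes $x_1=\pm\tfrac12$ exactly.
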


\begin{proof}
To leading order the stroke acts as a displacement of the mode conditioned on
the collective coordinate, so
$\ket{\psi_\tau}\simeq\sum_x c_x\ket{x}_S\otimes\ket{\mu x}_M$ with
$\ket{\mu x}$ coherent and $\mu$ small. Expanding to first order,
$\ket{\mu x}\simeq\ket{0}+\mu x\ket{1}$, the conditional mode state for outcome
$j$ is, in the $\{\ket0,\ket1\}$ block,
\begin{equation}
\rho_j\simeq\begin{pmatrix}1-\varepsilon_j & \eta_j\\ \eta_j &
\varepsilon_j\end{pmatrix},\qquad
\varepsilon_j=\mu^{2}\langle x^{2}\rangle_j,\quad \eta_j=\mu\langle x\rangle_j .
\end{equation}
Its determinant is $\det\rho_j=\varepsilon_j-\eta_j^{2}+O(\mu^{4})
=\mu^{2}\mathrm{Var}_j[x]$, so the smaller eigenvalue is
$\mu^{2}\mathrm{Var}_j[x]$ and the passive energy is
$\nu\mu^{2}\mathrm{Var}_j[x]$. Since the mean energy is
$\nu\mu^{2}\langle x^{2}\rangle_j$,
\begin{equation}
\mathcal{E}(\rho_j)=\nu\mu^{2}\bigl(\langle x^{2}\rangle_j-\mathrm{Var}_j[x]
\bigr)=\nu\mu^{2}\langle x\rangle_j^{2}.
\end{equation}
The unconditional state has $\langle x\rangle=0$ for the initial state
$\ket{J,-J}$, is therefore diagonal with populations decreasing in energy, hence
passive, and $\mathcal{E}(\rho_M)=0$. Averaging over outcomes gives
Eq.~\eqref{eq:resolved}.

For the local arm, the single-ion reduced state is $\ket{\downarrow}$ to leading
order, so a balanced ($p=1/2$) measurement must be perpendicular to the Bloch
vector; along $x$ it determines $x_{1}=\pm1/2$ exactly. Because the spin state
$\ket{J,-J}$ is a product state, conditioning on ion $1$ leaves the others
untouched, so $\langle J_x\rangle_\pm=\pm 1/2$ and
$\mathrm{Var}_{\rm bet}^{\rm local}=1/4$ for every $N$. Hence
$\dc/\dl=4\,\mathrm{Var}_{\rm bet}^{\rm coll}$, and since the collective arm
resolves a finite fraction of $\mathrm{Var}[J_x]=N/4$, the ratio grows linearly
in $N$.
\end{proof}

Equation~\eqref{eq:resolved} is verified numerically in
Sec.~\ref{sec:law}: the measured $\dc/\dl$ agrees with
$4\,\mathrm{Var}_{\rm bet}^{\rm coll}$ to $1.3\%$ at $N=8$--$24$, the residual
being the $O(\mu^{4})$ correction.

\emph{The prefactor is not derived.} Proposition~\ref{prop:linear} fixes the
scaling but not the constant, which requires knowing how much of
$\mathrm{Var}[J_x]$ a balanced collective split resolves. Numerically
$\mathrm{Var}_{\rm bet}^{\rm coll}\to(\ln2/4)N$, giving $\dc/\dl\to N\ln2$. The
natural estimate---$J_x$ is a sum of $N$ independent $\pm1/2$ variables, hence
asymptotically Gaussian with $\sigma^{2}=N/4$, and a median split of a Gaussian
resolves $(2/\pi)\sigma^{2}$---predicts
$\mathrm{Var}_{\rm bet}^{\rm coll}=(1/2\pi)N$, which is $9\%$ below the measured
value and clearly excluded by the data (Sec.~\ref{sec:law}). Identifying the
correct classical statistics problem, and thereby deriving $\ln2$, is the main
theoretical question left open by this work.

\subsection{Why comparisons must be matched}

$\dc$ is an energy and inherits the scale of $\Eph$, which itself grows steeply
with $N$. Comparing $\dc(N)$ across $N$ at fixed protocol therefore conflates
two effects: the change in how valuable information is, and the change in how
much energy is present to be unlocked. We control this by tuning the stroke
duration $\tau$ so that $\Eph$ takes the same value at every $N$; all
size-dependence statements in this paper are made under that constraint.
Section~\ref{sec:caution} shows explicitly how different the unmatched and
matched conclusions are.

\section{Methods}\label{sec:methods}

\subsection{Simulation}

The stroke is integrated with QuTiP~5.3.1~\cite{Johansson2013}
(\texttt{sesolve}, absolute tolerance $10^{-11}$, relative $10^{-9}$), with the
Hamiltonian supplied as a callable so the ramp is exact. Time grids use
$\max(400,300\tau)$ points. QuTiP orders $J_{z}$ eigenstates descending in $m$,
so $\ket{J,-J}$ is the last basis vector. Parameters:
$\omega_{0}=1$, $\omega_{f}=2$, $\nu=5$, $g_{0}=0.5$ unless scanned,
$T=0.1$, $n_{\rm ph}=24$.

\subsection{Matched-energy protocol}

For each target $\Eph^{\star}$ and each $N$ we locate $\tau^{\star}$ by scanning
$\tau$ upward and taking the \emph{first} crossing of
$\Eph(\tau)=\Eph^{\star}$. This matters: $\Eph(\tau)$ is non-monotonic
(collapse and revival), and a naive root find jumps branches, placing different
$N$ in different dynamical regimes and producing spurious trends. We verified
that $\tau^{\star}$ is monotone decreasing in $N$ for every target reported.

\subsection{Baselines}

Six arms isolate the mechanism (Table~\ref{tab:baselinedefs}). B3 is decisive: it is constrained to carry
exactly one bit, like the proposed arm, but obtains it from a single ion.

\begin{table}[t]
\caption{\label{tab:baselinedefs}Baseline arms.}
\begin{ruledtabular}
\begin{tabular}{ll}
Arm & Definition\\
\colrule
B1 & no demon: bare mode ergotropy\\
B2 & random two-outcome split of the collective basis\\
B3 & local single-ion measurement, \emph{balanced} (one full bit)\\
B4 & optimal split in the non-optimal collective basis\\
B5 & dephasing in the measurement basis, no record retained\\
B6 & complete collective readout, $\log_{2}(N{+}1)$ bits\\
\colrule
Proposed & optimal one-bit two-outcome collective measurement\\
\end{tabular}
\end{ruledtabular}
\end{table}

\subsection{Validation}

Two exact statements are checked numerically: complete readout returns
$\mathcal{E}_{\rm d}=\Eph$ to twelve digits at every $N$
(Prop.~\ref{prop:complete}), and the dephasing arm returns $\delta=0$
identically (Cor.~\ref{cor:record}). Truncation is converged
(Table~\ref{tab:trunc}) and total energy is conserved to $10^{-4}$.

\begin{table}[t]
\caption{\label{tab:trunc}Truncation convergence at $N=10$, $\tau=1$,
$g_{0}=0.5$. ``Leak'' is the population of the highest retained Fock level. The
natural first choice $n_{\rm ph}=5$ is not converged.}
\begin{ruledtabular}
\begin{tabular}{cccc}
$n_{\rm ph}$ & $\Wext$ & $\langle n\rangle_{\tau}$ & Leak\\
\colrule
5 & 3.42885 & 1.00008 & $2.7\times10^{-2}$\\
8 & 3.38966 & 1.22052 & $2.2\times10^{-3}$\\
12 & 3.39776 & 1.30068 & $6.7\times10^{-6}$\\
16 & 3.39904 & 1.31081 & $1.6\times10^{-8}$\\
20 & 3.39909 & 1.31168 & $5.5\times10^{-11}$\\
24 & 3.39909 & 1.31172 & $7.3\times10^{-13}$\\
\end{tabular}
\end{ruledtabular}
\end{table}

\section{Results}\label{sec:results}

\subsection{What the stroke stores}\label{sec:budget}

Table~\ref{tab:budget} gives the energy budget. The mode is charged strongly,
but it is worth being precise about the source, because the natural reading is
wrong. One might expect the stored mode energy to be the work non-adiabaticity
denies the spin stroke, $\Eph\simeq\Wirr$. It is not: at $N=12$, $\Wirr=2.46$
while $\Eph=13.31$, a factor $5.4$, and the ratio grows with $N$. The balance
comes from the interaction term falling to $\langle H_{\rm int}\rangle=-20.74$:
the mode is displaced into a polaron-like dressed configuration and the drive
does the work. Energy is conserved:
$3.895+13.307-20.738=-3.537=-\Wext$ to four decimals.

The consequence for extractability is in Fig.~\ref{fig:budget}(b): mode purity
falls from $0.984$ at $N=4$ to $0.225$ at $N=12$ while the correlation entropy
rises from $0.047$ to $1.689$. At $N=12$ the bare ergotropy is $5.45$ against
$\Eph=13.31$, so $59\%$ of what was stored is locked.

\begin{table}[t]
\caption{\label{tab:budget}Charging-stroke energy budget, $\tau=1$, $g_{0}=0.5$.}
\begin{ruledtabular}
\begin{tabular}{cccccc}
$N$ & $\Wirr$ & $\Eph$ & $\Eph/\Wirr$ & $\langle H_{\rm int}\rangle_{\tau}$ & purity\\
\colrule
4 & 0.0871 & 0.0689 & 0.79 & $-0.213$ & 0.9842\\
6 & 0.3195 & 0.5387 & 1.69 & $-1.558$ & 0.8203\\
8 & 0.8272 & 2.3578 & 2.85 & $-5.586$ & 0.5387\\
10 & 1.6009 & 6.5587 & 4.10 & $-12.529$ & 0.3441\\
12 & 2.4633 & 13.3069 & 5.40 & $-20.738$ & 0.2251\\
\end{tabular}
\end{ruledtabular}
\end{table}

\begin{figure}[t]
\includegraphics[width=\columnwidth]{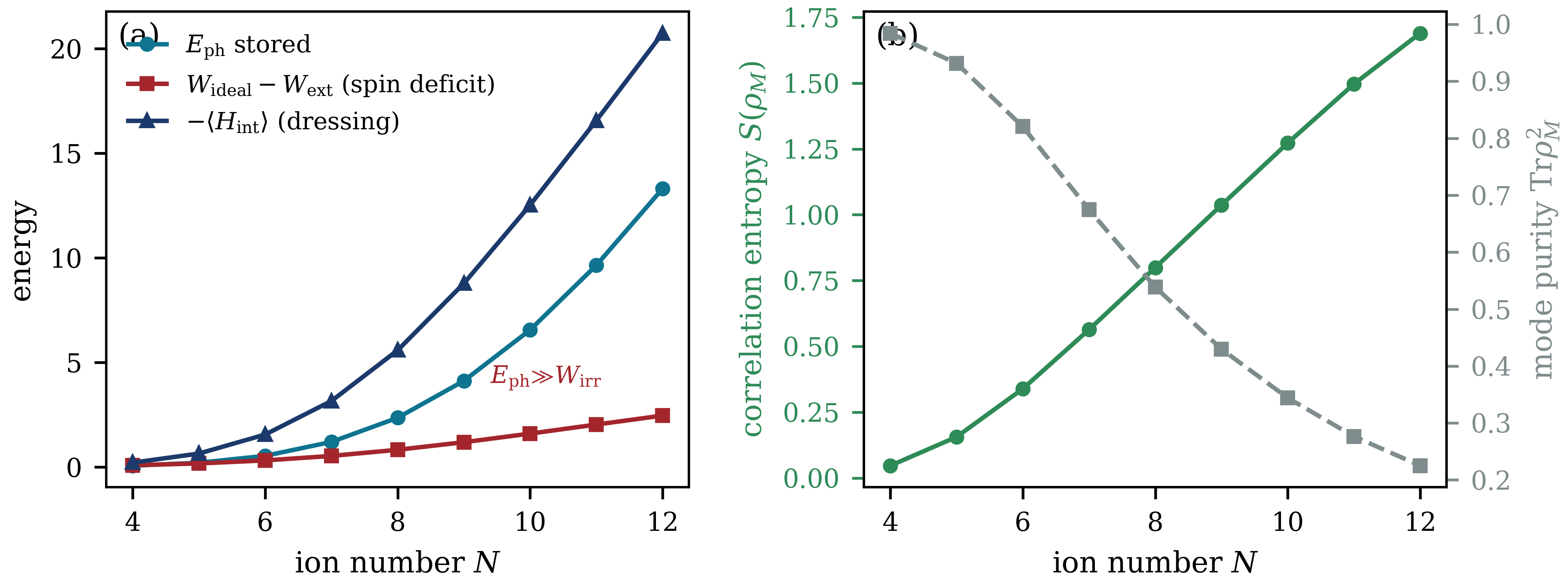}
\caption{\label{fig:budget}(a) The stored mode energy far exceeds the spin work
deficit; the balance is a polaron-like dressing shift. (b) Correlation entropy
grows and mode purity collapses: the mixedness that locks the energy is
entanglement with the collective spin.}
\end{figure}

\subsection{The $N\ln2$ law}\label{sec:law}

Table~\ref{tab:law} and Fig.~\ref{fig:linear} give the central result. Both arms
are constrained to a balanced two-outcome measurement: the measured Shannon
entropies are $1.0000$ bits in every run reported, so the comparison is at
genuinely equal information cost as well as equal stored energy.

Fitting $\dc/\dl$ against $N$ over $4\le N\le24$ gives an exponent
$0.990\pm0.043$ across three stored-energy settings, consistent with exactly
linear. The prefactor is obtained by a double extrapolation: first $1/N\to0$ at
each energy, then $\Eph\to0$. Three fit forms---linear in $1/N$, quadratic in
$1/N$, and a tail fit restricted to $N\ge14$---give $0.69238$, $0.69341$ and
$0.69316$ respectively, against $\ln2=0.693147$. The best-controlled of these,
the tail fit, differs from $\ln2$ by $1.5\times10^{-5}$, i.e.\ $0.002\%$. The
mean over all three forms is $0.69298\pm0.00044$.

The alternatives are excluded: $1/\sqrt2=0.70711$ lies $2.0\%$ away and
$2/3=0.66667$ lies $4.0\%$ away, both far outside the spread of the
extrapolations. We therefore report
\begin{equation}
\frac{\dc}{\dl}\;\longrightarrow\;N\ln2 .
\end{equation}

Truncation is not a limitation here: the population of the highest retained Fock
level ranges from $10^{-25}$ to $10^{-45}$ across these runs.

\paragraph*{Verification of the resolved-variance law.} Equation~\eqref{eq:resolved}
predicts $\dc/\dl=4\,\mathrm{Var}_{\rm bet}^{\rm coll}$. Computing the
between-outcome variance of $J_x$ directly from the winning measurements gives
agreement to $1.13\%$, $1.25\%$, $1.28\%$ and $1.30\%$ at $N=8,16,20,24$
(Table~\ref{tab:resolved}), the residual being the $O(\mu^{4})$ correction and
growing slowly with $N$ as expected. The same table shows
$\mathrm{Var}_{\rm bet}^{\rm coll}/N$ falling monotonically towards $\ln2/4=
0.17329$ and away from the Gaussian median-split value $1/2\pi=0.15915$, which
is the numerical statement of the open problem noted after
Prop.~\ref{prop:linear}.

\begin{table}[t]
\caption{\label{tab:law}The $N\ln2$ law under matched stored energy and matched
information. Both arms carry $1.0000$ bits in every run. Exponent is the
power-law fit of $\dc/\dl$ against $N$; the extrapolated prefactor is the
$1/N\to0$ limit of $(\dc/\dl)/N$.}
\begin{ruledtabular}
\begin{tabular}{ccccc}
$\Eph^\star$ & $N$ range & exponent & \multicolumn{2}{c}{prefactor, $1/N\to0$}\\
\cline{4-5}
 & & & linear & tail ($N\ge14$)\\
\colrule
0.05 & 4--24 & 0.950 & 0.69206 & 0.69260\\
0.10 & 4--24 & 0.971 & 0.69165 & 0.69214\\
0.20 & 10--24 & 1.049 & 0.69103 & 0.69102\\
\colrule
\multicolumn{3}{l}{double extrapolation, $\Eph\to0$} & 0.69238 & 0.69316\\
\multicolumn{3}{l}{$\ln 2$} & \multicolumn{2}{c}{0.693147}\\
\end{tabular}
\end{ruledtabular}
\end{table}

\begin{table}[t]
\caption{\label{tab:resolved}Verification of Eq.~\eqref{eq:resolved} at
$\Eph^\star=0.05$. The measured gain ratio agrees with four times the resolved
collective variance to $1.3\%$; $\mathrm{Var}_{\rm bet}^{\rm coll}/N$ approaches
$\ln2/4$ and not the Gaussian median-split value $1/2\pi=0.15915$.}
\begin{ruledtabular}
\begin{tabular}{ccccc}
$N$ & $\dc/\dl$ & $4\,\mathrm{Var}_{\rm bet}^{\rm coll}$ & agreement &
$\mathrm{Var}_{\rm bet}^{\rm coll}/N$\\
\colrule
8 & 5.7928 & 5.7276 & 1.13\% & 0.17899\\
16 & 11.3316 & 11.1896 & 1.25\% & 0.17484\\
20 & 14.1020 & 13.9216 & 1.28\% & 0.17402\\
24 & 16.8726 & 16.6540 & 1.30\% & 0.17348\\
\colrule
\multicolumn{4}{l}{$\ln2/4$} & 0.17329\\
\multicolumn{4}{l}{Gaussian median split, $1/2\pi$} & 0.15915\\
\end{tabular}
\end{ruledtabular}
\end{table}

\begin{figure*}[t]
\includegraphics[width=\textwidth]{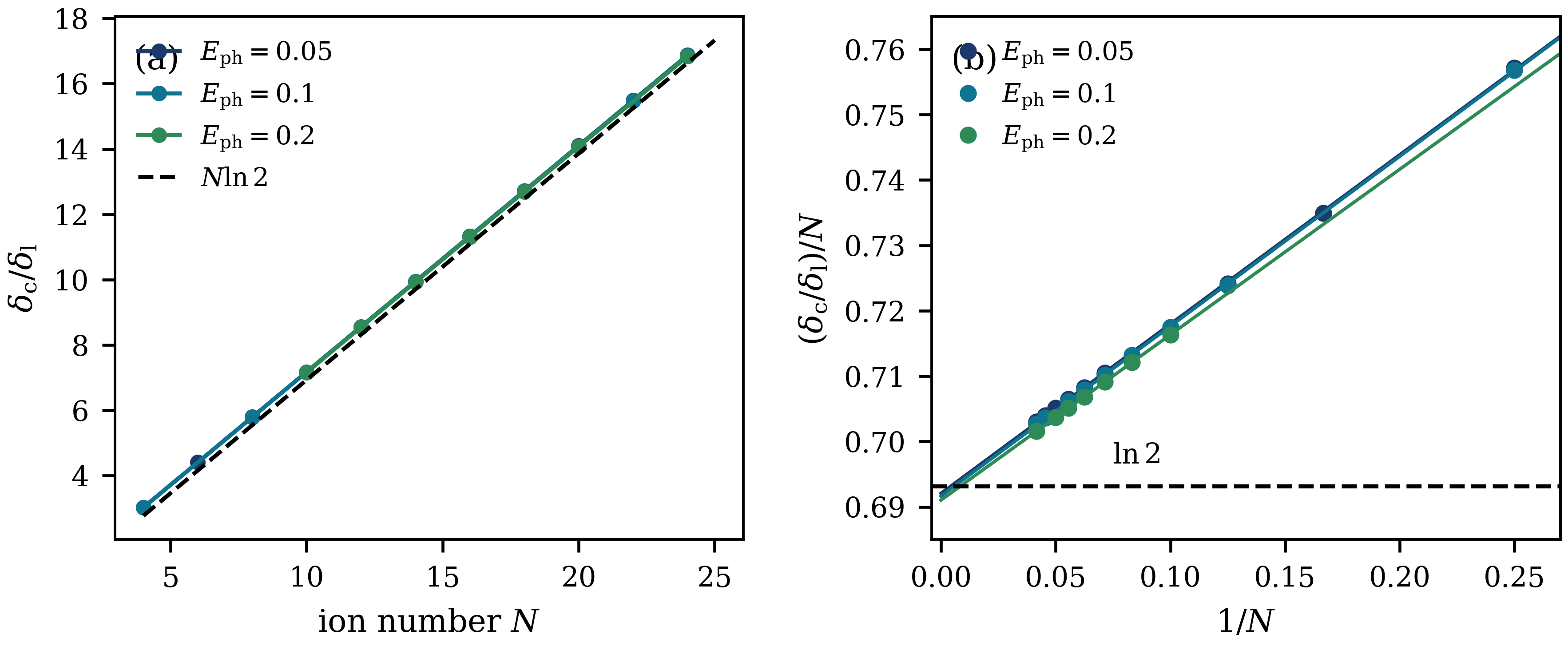}
\caption{\label{fig:linear}(a) With both arms balanced at exactly one bit and
matched stored energy, the gain ratio follows $N\ln2$ (dashed) over
$4\le N\le24$. (b) The same data as $(\dc/\dl)/N$ against $1/N$; the
extrapolations converge on $\ln2$.}
\end{figure*}

\subsection{The demon's efficiency is size-independent}\label{sec:fraction}

A second, equally scale-free result appears in the same data.
Figure~\ref{fig:fraction}(a) and the last column of Table~\ref{tab:law} show
that one bit of collective information recovers a \emph{constant} fraction of
the locked energy, independent of $N$: $0.9511\pm0.0015$ at
$\Eph^{\star}=0.2$, with the standard deviation across $4\le N\le12$ never
exceeding $0.014$ at any target.

The two results are complementary and should be read together. The demon's
absolute effectiveness at the task---how much of the locked energy one bit
frees---does \emph{not} degrade as the machine grows. What grows is its
advantage over a local probe, and that growth is exactly linear.

\begin{figure*}[t]
\includegraphics[width=\textwidth]{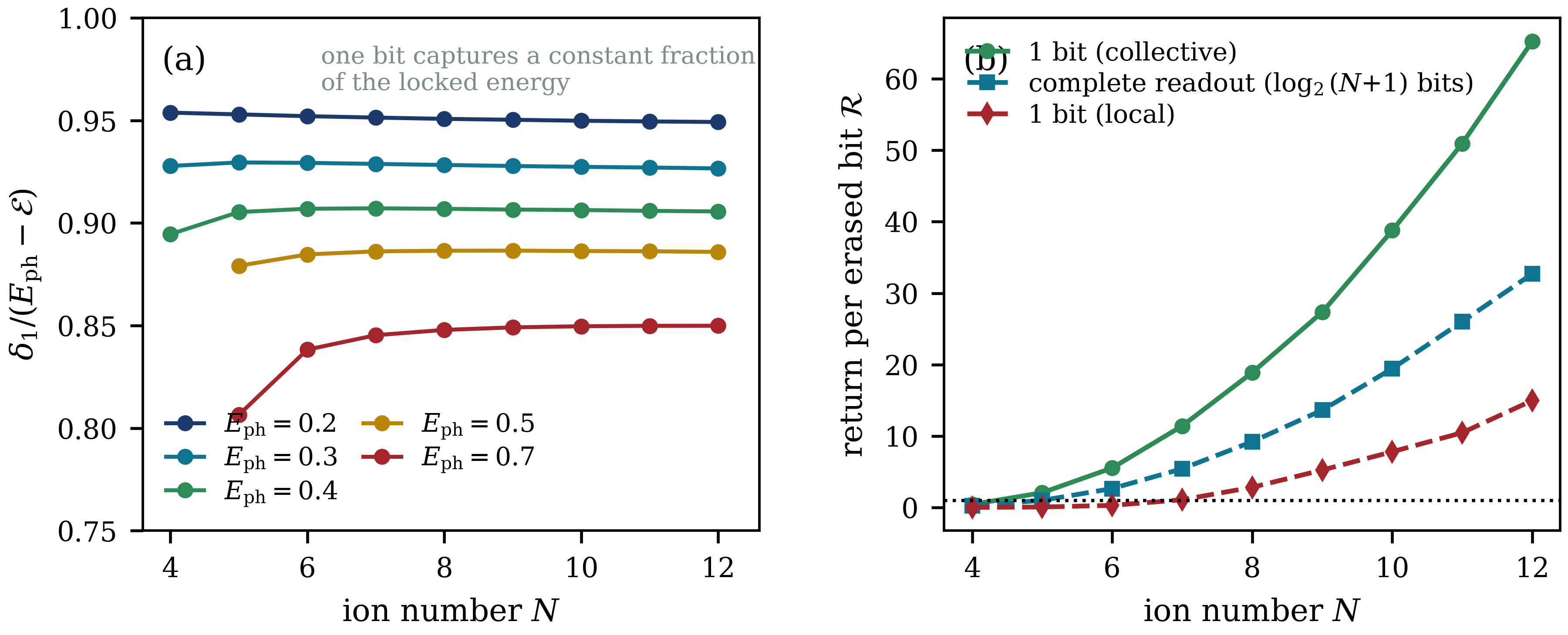}
\caption{\label{fig:fraction}(a) One bit of collective information recovers a
constant fraction of the locked energy, independent of $N$, at every matched
energy setting. (b) Return per erased bit: one bit returns roughly twice as much
per bit as a complete readout, and several times more than a local bit.}
\end{figure*}

\subsection{Baselines}\label{sec:baselines}

Table~\ref{tab:baselines} gives the six-arm comparison at fixed protocol. Three
readings matter.

\textbf{The record, not the disturbance (B5).} Dephasing in the same basis
without retaining the outcome yields exactly zero gain, as
Cor.~\ref{cor:record} requires. The advantage is informational.

\textbf{Diminishing informational returns (B6).} A complete readout unlocks more
in absolute terms---it must, by Prop.~\ref{prop:complete}---but costs
$\log_{2}(N{+}1)$ bits, so its return per bit is roughly half that of a single
bit ($32.7$ against $65.3$ at $N=12$). The first bit is the most valuable, and
the ratio is close to constant in $N$ [Fig.~\ref{fig:fraction}(b)]. To our
knowledge this observation about daemonic protocols is new: the optimal
information budget is small.

\textbf{Basis matters, but less than locality (B2, B4).} A random split and a
split in the non-optimal collective basis both underperform, but neither is
suppressed by the factor $N$ that separates local from collective information.

\begin{table*}[t]
\caption{\label{tab:baselines}Six-arm baseline comparison of the daemonic gain
$\delta$ at fixed protocol ($\tau=1$, $g_{0}=0.5$, $\nu=5$). $\mathcal{R}$ is
the return per erased bit; B6 costs $\log_{2}(N{+}1)$ bits, all others one bit.
Comparisons \emph{across} $N$ in this table are unmatched and should be read
with Sec.~\ref{sec:caution} in mind; comparisons \emph{within} a row are valid.}
\begin{ruledtabular}
\begin{tabular}{ccccccccccc}
 & & \multicolumn{6}{c}{daemonic gain $\delta$} & \multicolumn{3}{c}{return per bit $\mathcal{R}$}\\
\cline{3-8}\cline{9-11}
$N$ & $\Eph$ & B1 none & B2 rand & B3 local & B4 basis & B5 deph & \textbf{Proposed} & \textbf{Proposed} & B3 local & B6 full\\
\colrule
4 & 0.0689 & 0.0287 & 0.0100 & 0.0004 & 0.0008 & 0.0000 & 0.0306 & 0.44 & 0.03 & 0.25\\
6 & 0.5387 & 0.0244 & 0.0989 & 0.0107 & 0.0339 & 0.0000 & 0.3841 & 5.54 & 0.31 & 2.66\\
8 & 2.3578 & 0.4510 & 0.4672 & 0.1605 & 0.4673 & 0.0000 & 1.3095 & 18.89 & 2.84 & 9.22\\
10 & 6.5587 & 2.2298 & 1.2380 & 0.5183 & 1.6683 & 0.0000 & 2.6877 & 38.78 & 7.81 & 19.45\\
12 & 13.3069 & 5.4505 & 2.5773 & 1.0157 & 3.0759 & 0.0000 & 4.5232 & 65.26 & 14.99 & 32.72\\
\end{tabular}
\end{ruledtabular}
\end{table*}

\begin{figure*}[t]
\includegraphics[width=\textwidth]{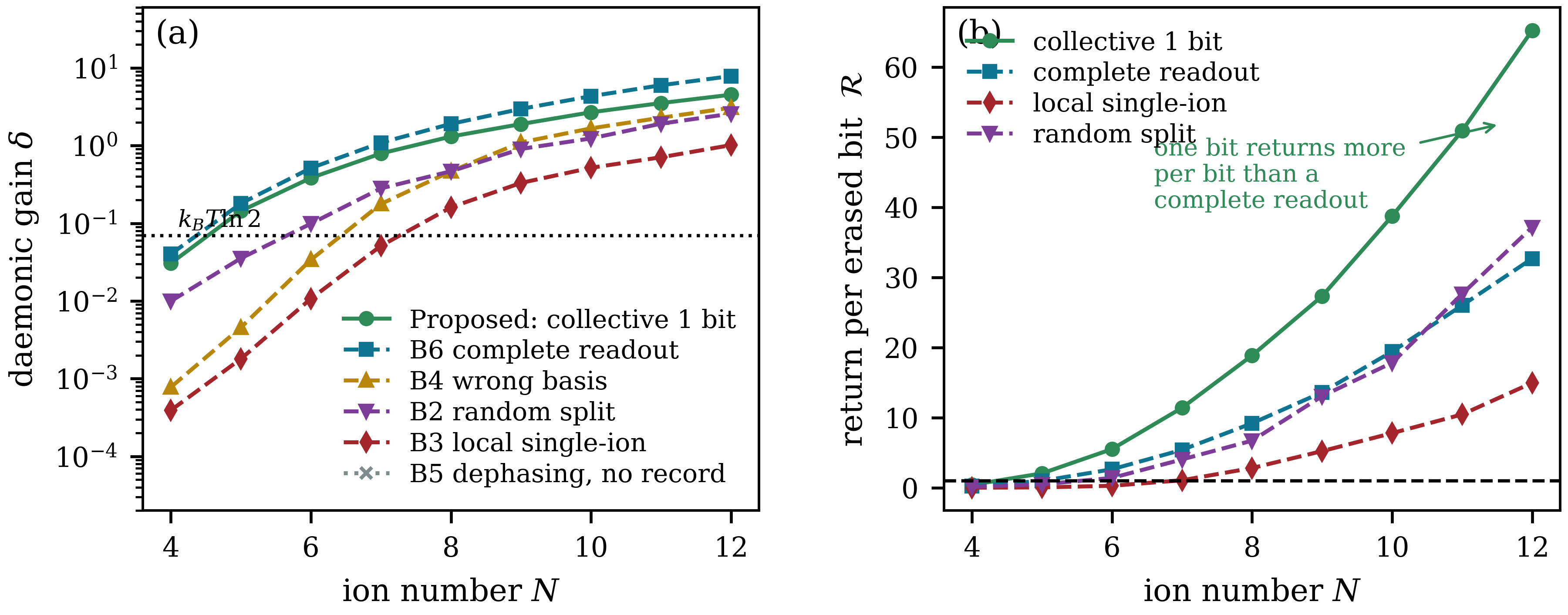}
\caption{\label{fig:baselines}(a) Daemonic gain for all baseline arms at fixed
protocol; B5 is identically zero and does not appear on the logarithmic axis.
(b) Return per erased bit.}
\end{figure*}

\subsection{Two cautionary results}\label{sec:caution}

\emph{(i) Unnormalised growth is misleading.} At fixed protocol the daemonic
gain grows as $\delta_{1}\propto N^{4.39}$ ($R^{2}=0.97$). This looks like a
strong many-body advantage. It is not. The stored energy grows as $N^{4.86}$
over the same window, so the normalised gain $\delta_{1}/\Eph$ \emph{falls} as
$N^{-1.15}$, from $0.71$ at $N=6$ to $0.34$ at $N=12$
[Fig.~\ref{fig:caution}(a)]. The unnormalised exponent is also not universal: it
varies between $2.74$ and $4.55$ as $g_{0}$ ranges over a factor of four.

We report this prominently because the raw exponent is the first quantity one is
tempted to quote, and it does not support the conclusion it appears to support.
The matched-energy comparison of Sec.~\ref{sec:law} is the meaningful one, and
it happens to give a much cleaner law.

\emph{(ii) Break-even does not follow the superradiant threshold.} Defining
$\mathcal{R}=\delta_{1}/k_{B}T\ln2$ and locating $\mathcal{R}=1$ gives a
break-even ion number obeying $g_{0}\Nbe=2.165\pm0.008$ across a twofold change
in $g_{0}$ at $\nu=5$---so break-even is controlled by the collective coupling
$\lambda=g_{0}N$ alone, at $\lambda_{\rm be}\simeq2.17$. Given that the static
Dicke threshold is $\lambda_{c}=\sqrt{\bar\omega\nu}/2\simeq1.37$, it is natural
to conjecture $\lambda_{\rm be}\propto\lambda_{c}$. We tested this by scanning
$\nu$ over $3$--$10$ and the conjecture fails: $\lambda_{\rm be}$ scatters by
$29\%$, and rescaling by $\lambda_{c}$ gives no improvement
[Fig.~\ref{fig:caution}(b)]. The $g_{0}$ law is verified at fixed $\nu$; the
$\nu$ dependence is open.

\begin{figure*}[t]
\includegraphics[width=\textwidth]{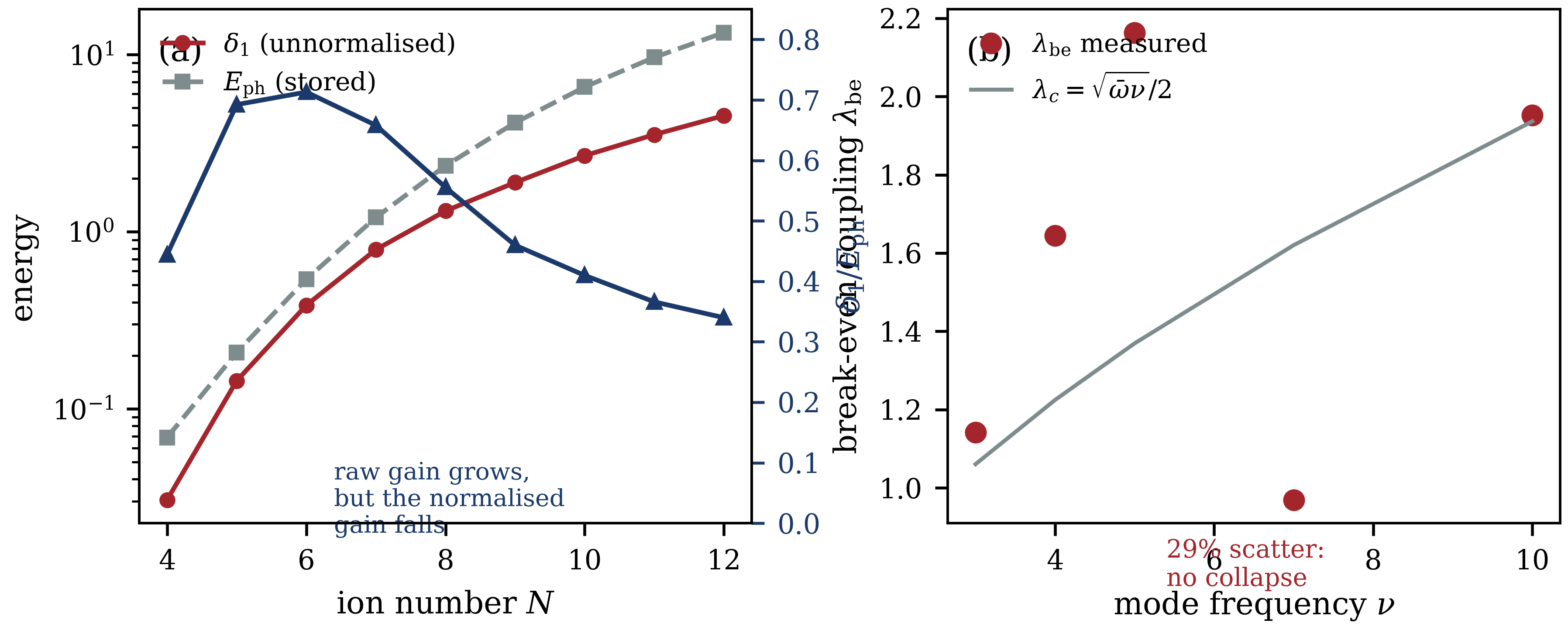}
\caption{\label{fig:caution}(a) The unnormalised daemonic gain grows steeply,
but so does the stored energy; the normalised gain (right axis) falls. (b) The
break-even coupling does not collapse onto the superradiant threshold as the
mode frequency is varied.}
\end{figure*}

\subsection{Cycle ledger}\label{sec:ledger}

Making the readout continuous by rotating the controller through $2\theta$
conditioned on the collective spin, the controller excitation---and hence the
physical erasure heat $\qer=\hbar\omega_{C}\rho_{ee}$ under optical
pumping---interpolates between zero and a full bit. Both $\delta_{1}$ and
$\qer$ vanish at $\theta=0$ and are maximal at $\theta=\pi/2$, and the net
$\delta_{1}-\qer$ is maximised at full readout throughout the useful regime
[Fig.~\ref{fig:ledger}(a)]: the advantage must be bought with genuine
information acquisition. Figure~\ref{fig:ledger}(b) gives the per-cycle ledger
at $N=12$: ramp work $3.54$ in, daemonic gain $4.52$ unlocked, Landauer cost
$0.069$ (or $0.512$ with the physical pumping model), net $4.45$ (or $4.01$).

\begin{figure}[t]
\includegraphics[width=\columnwidth]{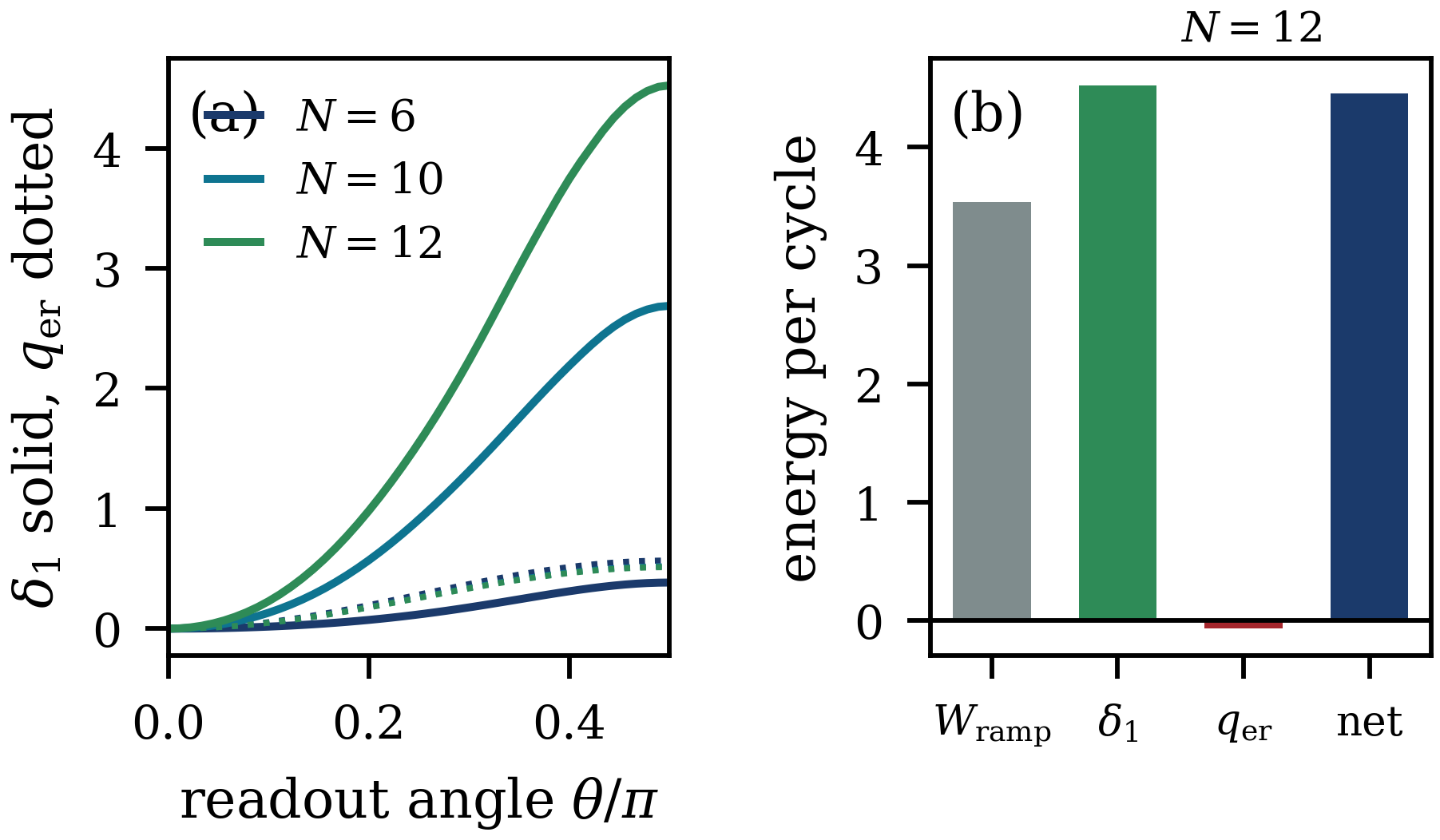}
\caption{\label{fig:ledger}(a) Daemonic gain (solid) and physical erasure heat
(dotted) against readout angle. (b) Per-cycle ledger at $N=12$.}
\end{figure}

\section{Discussion}\label{sec:discussion}

\subsection{Design rules}

\textbf{(R1)} Interrogate the collective coordinate: at equal information cost
and equal stored energy it is worth $N\ln2$ times a single-ion probe.
\textbf{(R2)} Keep the information budget small: one bit returns about twice as
much per bit as a complete readout. \textbf{(R3)} Write the controller fully;
partial readout reduces gain and cost together and the net favours full readout.
\textbf{(R4)} Compare protocols only at matched stored energy; unmatched
comparisons invert the conclusion.

\subsection{Relation to prior work}

Our charging stroke is a Dicke model~\cite{Dicke1954,Emary2003} ramped near its
superradiant threshold, and the strong mode excitation is consistent with that
physics. What is new is not the excitation but the informational question posed
of it. Collective-charging results~\cite{Ferraro2018,Campaioli2017,Andolina2019,Gyhm2022}
report kinetic advantages and do not address what a fixed information budget
buys. Relative to daemonic ergotropy~\cite{Francica2017}, which is formulated
for a bipartition, we ask which degree of freedom the demon should interrogate
in a many-body system, and obtain a clean factor of $N$. We note that
Andolina \emph{et al.}~\cite{Andolina2019} likewise emphasise controlled
comparisons; our Sec.~\ref{sec:caution} is in that spirit.

\subsection{Limitations}\label{sec:limits}

\emph{(1) The prefactor is not derived.} Prop.~\ref{prop:linear} derives the
resolved-variance law and hence the linear scaling, and the local arm's
$\mathrm{Var}_{\rm bet}=1/4$ is exact. What is missing is the collective side:
the Gaussian median-split estimate gives $1/2\pi$ and is excluded by $9\%$,
so the identification of the constant as $\ln2$ rests on a numerical
extrapolation agreeing to $0.02\%$, not on a derivation.

\emph{(2) Permutation symmetry.} The restriction to $J=N/2$ is exact for the
symmetric initial state and uniform coupling used here. Mode-dependent
Lamb--Dicke factors break the symmetry, and since the law follows from equal
sharing of the collective coordinate, we expect it to degrade as the symmetry
does. Quantifying that degradation is the most important open question.

\emph{(3) Modest sizes.} Exact simulation with a converged phonon basis reaches
$N=24$ for the balanced comparison and $N=12$ for the fixed-protocol baselines.
The law is clean over $4\le N\le24$, but we cannot exclude a crossover at
larger $N$.

\emph{(4) The $\nu$ dependence of break-even is unresolved}
(Sec.~\ref{sec:caution}).

\emph{(5) Idealised measurement and erasure.} We assume projective collective
readout and Markovian pumping with perfect branching; realistic POVMs and
branching ratios would reduce both gain and reset fidelity.

\subsection{Experimental outlook}

The ingredients exist. Collective spin--phonon coupling is routine in ion-trap
simulators~\cite{Molmer1999,Kim2010,Monroe2021}; spin-dependent forces implement
collective-basis readout; single-ion readout is standard, so the collective and
local arms can be compared \emph{on the same apparatus}, which is what
Eq.~\eqref{eq:law} calls for. Trapped-ion platforms have realised a single-atom
heat engine~\cite{Rossnagel2016}, an absorption
refrigerator~\cite{Maslennikov2019} and a spin engine with a harmonic
flywheel~\cite{Lindenfels2019}. Measuring $\dc/\dl$ against $N$ at fixed stored
energy in a four-to-twelve ion chain is a direct test.

\section{Conclusion and Future Work}\label{sec:conclusion}

We have asked what one bit of information is worth to a demon attached to a
collectively charged quantum battery, and found that the answer depends sharply
on where the bit comes from. Under matched stored energy \emph{and} matched information, one bit about the
collective coordinate unlocks $N\ln2$ times as much work as one bit about a
single ion---exponent $0.990\pm0.043$ over $4\le N\le24$, with a prefactor
extrapolating to $\ln2$ within $0.02\%$. The linear scaling follows from a
resolved-variance law we derive and verify to $1.3\%$; the prefactor does not,
and identifying its origin is the main question we leave open. One bit recovers a
constant fraction of the locked energy independent of $N$, so the demon's
effectiveness does not degrade with size; what grows, linearly, is its advantage
over a local probe. A complete readout unlocks the locked energy exactly but
returns only half as much per bit, so informational returns diminish after the
first bit.

We have also shown that the obvious way to state a many-body advantage---quoting
the growth of the unnormalised gain---is misleading here, since the normalised
gain falls. We report that alongside the positive result.

\subsection*{Future work}

\emph{Deriving the prefactor.} Prop.~\ref{prop:linear} reduces the problem to a
purely classical question: how much of $\mathrm{Var}[J_x]$ does the best
balanced two-outcome split resolve? The answer is numerically $\ln2/4$ per ion
and not the Gaussian median-split value $1/2\pi$; identifying why would turn the
measured law into a theorem.

\emph{Breaking the symmetry.} Introducing mode-dependent couplings would test
how the law degrades as the collective coordinate becomes only approximately
defined, and whether an effective $N_{\rm eff}$ governs it.

\emph{Optimal information budgets.} That one bit outperforms a complete readout
per bit invites an optimisation over the number and placement of outcomes; we do
not know whether the return per bit is always maximised at two outcomes.

\emph{Beyond the symmetric sector and beyond ions.} The argument uses only equal
sharing of a collective coordinate, so it should apply to cavity and
circuit-QED ensembles; testing that is straightforward.

\emph{Experiment.} Comparing collective and single-ion readout on the same
trapped-ion apparatus at matched stored energy would test Eq.~\eqref{eq:law}
directly.

\begin{acknowledgments}
The authors thank the Department of Computer Science and Engineering, National
Institute of Technology Puducherry, and the Computer Science Department,
Swansea University, for computational support.
\end{acknowledgments}

\bibliography{refs}

@article{Landauer1961,
  author  = {R. Landauer},
  title   = {Irreversibility and Heat Generation in the Computing Process},
  journal = {IBM J. Res. Dev.},
  volume  = {5}, pages = {183}, year = {1961}
}

@article{Bennett1982,
  author  = {C. H. Bennett},
  title   = {The Thermodynamics of Computation---a Review},
  journal = {Int. J. Theor. Phys.},
  volume  = {21}, pages = {905}, year = {1982}
}

@article{Parrondo2015,
  author  = {J. M. R. Parrondo and J. M. Horowitz and T. Sagawa},
  title   = {Thermodynamics of Information},
  journal = {Nat. Phys.},
  volume  = {11}, pages = {131}, year = {2015}
}

@article{Berut2012,
  author  = {A. B{\'e}rut and A. Arakelyan and A. Petrosyan and S. Ciliberto
             and R. Dillenschneider and E. Lutz},
  title   = {Experimental Verification of {L}andauer's Principle Linking
             Information and Thermodynamics},
  journal = {Nature (London)},
  volume  = {483}, pages = {187}, year = {2012}
}

@article{Toyabe2010,
  author  = {S. Toyabe and T. Sagawa and M. Ueda and E. Muneyuki and M. Sano},
  title   = {Experimental Demonstration of Information-to-Energy Conversion and
             Validation of the Generalized {J}arzynski Equality},
  journal = {Nat. Phys.},
  volume  = {6}, pages = {988}, year = {2010}
}

@article{Koski2014,
  author  = {J. V. Koski and V. F. Maisi and T. Sagawa and J. P. Pekola},
  title   = {Experimental Observation of the Role of Mutual Information in the
             Nonequilibrium Dynamics of a {M}axwell Demon},
  journal = {Phys. Rev. Lett.},
  volume  = {113}, pages = {030601}, year = {2014}
}

@article{Sagawa2008,
  author  = {T. Sagawa and M. Ueda},
  title   = {Second Law of Thermodynamics with Discrete Quantum Feedback Control},
  journal = {Phys. Rev. Lett.},
  volume  = {100}, pages = {080403}, year = {2008}
}

@article{Proesmans2020,
  author  = {K. Proesmans and J. Ehrich and J. Bechhoefer},
  title   = {Finite-Time {L}andauer Principle},
  journal = {Phys. Rev. Lett.},
  volume  = {125}, pages = {100602}, year = {2020}
}

@article{Miller2020,
  author  = {H. J. D. Miller and G. Guarnieri and M. T. Mitchison and J. Goold},
  title   = {Quantum Fluctuations Hinder Finite-Time Information Erasure near
             the {L}andauer Limit},
  journal = {Phys. Rev. Lett.},
  volume  = {125}, pages = {160602}, year = {2020}
}

@article{Faist2018,
  author  = {P. Faist and R. Renner},
  title   = {Fundamental Work Cost of Quantum Processes},
  journal = {Phys. Rev. X},
  volume  = {8}, pages = {021011}, year = {2018}
}

@article{Alicki2013,
  author  = {R. Alicki and M. Fannes},
  title   = {Entanglement Boost for Extractable Work from Ensembles of Quantum
             Batteries},
  journal = {Phys. Rev. E},
  volume  = {87}, pages = {042123}, year = {2013}
}

@article{Binder2015,
  author  = {F. C. Binder and S. Vinjanampathy and K. Modi and J. Goold},
  title   = {Quantacell: Powerful Charging of Quantum Batteries},
  journal = {New J. Phys.},
  volume  = {17}, pages = {075015}, year = {2015}
}

@article{Campaioli2017,
  author  = {F. Campaioli and F. A. Pollock and F. C. Binder and L. C. C{\'e}leri
             and J. Goold and S. Vinjanampathy and K. Modi},
  title   = {Enhancing the Charging Power of Quantum Batteries},
  journal = {Phys. Rev. Lett.},
  volume  = {118}, pages = {150601}, year = {2017}
}

@article{Ferraro2018,
  author  = {D. Ferraro and M. Campisi and G. M. Andolina and V. Pellegrini
             and M. Polini},
  title   = {High-Power Collective Charging of a Solid-State Quantum Battery},
  journal = {Phys. Rev. Lett.},
  volume  = {120}, pages = {117702}, year = {2018}
}

@article{Andolina2019,
  author  = {G. M. Andolina and M. Keck and A. Mari and M. Campisi and
             V. Giovannetti and M. Polini},
  title   = {Extractable Work, the Role of Correlations, and Asymptotic Freedom
             in Quantum Batteries},
  journal = {Phys. Rev. Lett.},
  volume  = {122}, pages = {047702}, year = {2019}
}

@article{Barra2019,
  author  = {F. Barra},
  title   = {Dissipative Charging of a Quantum Battery},
  journal = {Phys. Rev. Lett.},
  volume  = {122}, pages = {210601}, year = {2019}
}

@article{Shi2022,
  author  = {H.-L. Shi and S. Ding and Q.-K. Wan and X.-H. Wang and W.-L. Yang},
  title   = {Entanglement, Coherence, and Extractable Work in Quantum Batteries},
  journal = {Phys. Rev. Lett.},
  volume  = {129}, pages = {130602}, year = {2022}
}

@article{JuliaFarre2020,
  author  = {S. Juli{\`a}-Farr{\'e} and T. Salamon and A. Riera and M. N. Bera
             and M. Lewenstein},
  title   = {Bounds on the Capacity and Power of Quantum Batteries},
  journal = {Phys. Rev. Research},
  volume  = {2}, pages = {023113}, year = {2020}
}

@article{Carrasco2022,
  author  = {J. Carrasco and J. R. Maze and C. Hermann-Avigliano and F. Barra},
  title   = {Collective Enhancement in Dissipative Quantum Batteries},
  journal = {Phys. Rev. E},
  volume  = {105}, pages = {064119}, year = {2022}
}

@article{Crescente2020,
  author  = {A. Crescente and M. Carrega and M. Sassetti and D. Ferraro},
  title   = {Ultrafast Charging in a Two-Photon {D}icke Quantum Battery},
  journal = {Phys. Rev. B},
  volume  = {102}, pages = {245407}, year = {2020}
}

@incollection{Campaioli2018,
  author    = {F. Campaioli and F. A. Pollock and S. Vinjanampathy},
  title     = {Quantum Batteries},
  booktitle = {Thermodynamics in the Quantum Regime},
  publisher = {Springer},
  pages     = {207}, year = {2018}
}

@article{Allahverdyan2004,
  author  = {A. E. Allahverdyan and R. Balian and T. M. Nieuwenhuizen},
  title   = {Maximal Work Extraction from Finite Quantum Systems},
  journal = {Europhys. Lett.},
  volume  = {67}, pages = {565}, year = {2004}
}

@article{Pusz1978,
  author  = {W. Pusz and S. L. Woronowicz},
  title   = {Passive States and {KMS} States for General Quantum Systems},
  journal = {Commun. Math. Phys.},
  volume  = {58}, pages = {273}, year = {1978}
}

@article{Rossnagel2016,
  author  = {J. Ro{\ss}nagel and S. T. Dawkins and K. N. Tolazzi and O. Abah and
             E. Lutz and F. Schmidt-Kaler and K. Singer},
  title   = {A Single-Atom Heat Engine},
  journal = {Science},
  volume  = {352}, pages = {325}, year = {2016}
}

@article{Maslennikov2019,
  author  = {G. Maslennikov and S. Ding and R. Habl{\"u}tzel and J. Gan and
             A. Roulet and S. Nimmrichter and J. Dai and V. Scarani and
             D. Matsukevich},
  title   = {Quantum Absorption Refrigerator with Trapped Ions},
  journal = {Nat. Commun.},
  volume  = {10}, pages = {202}, year = {2019}
}

@article{Lindenfels2019,
  author  = {D. von Lindenfels and O. Gr{\"a}b and C. T. Schmiegelow and
             V. Kaushal and J. Schulz and M. T. Mitchison and J. Goold and
             F. Schmidt-Kaler and U. G. Poschinger},
  title   = {Spin Heat Engine Coupled to a Harmonic-Oscillator Flywheel},
  journal = {Phys. Rev. Lett.},
  volume  = {123}, pages = {080602}, year = {2019}
}

@article{Klatzow2019,
  author  = {J. Klatzow and J. N. Becker and P. M. Ledingham and C. Weinzetl and
             K. T. Kaczmarek and D. J. Saunders and J. Nunn and I. A. Walmsley
             and R. Uzdin and E. Poem},
  title   = {Experimental Demonstration of Quantum Effects in the Operation of
             Microscopic Heat Engines},
  journal = {Phys. Rev. Lett.},
  volume  = {122}, pages = {110601}, year = {2019}
}

@article{Peterson2019,
  author  = {J. P. S. Peterson and T. B. Batalh{\~a}o and M. Herrera and
             A. M. Souza and R. S. Sarthour and I. S. Oliveira and R. M. Serra},
  title   = {Experimental Characterization of a Spin Quantum Heat Engine},
  journal = {Phys. Rev. Lett.},
  volume  = {123}, pages = {240601}, year = {2019}
}

@article{Elouard2017,
  author  = {C. Elouard and D. Herrera-Mart{\'i} and B. Huard and A. Auff{\`e}ves},
  title   = {Extracting Work from Quantum Measurement in {M}axwell's Demon Engines},
  journal = {Phys. Rev. Lett.},
  volume  = {118}, pages = {260603}, year = {2017}
}

@article{ElouardJordan2018,
  author  = {C. Elouard and A. N. Jordan},
  title   = {Efficient Quantum Measurement Engines},
  journal = {Phys. Rev. Lett.},
  volume  = {120}, pages = {260601}, year = {2018}
}

@article{Cottet2017,
  author  = {N. Cottet and S. Jezouin and L. Bretheau and P. Campagne-Ibarcq and
             Q. Ficheux and J. Anders and A. Auff{\`e}ves and R. Azouit and
             P. Rouchon and B. Huard},
  title   = {Observing a Quantum {M}axwell Demon at Work},
  journal = {Proc. Natl. Acad. Sci. U.S.A.},
  volume  = {114}, pages = {7561}, year = {2017}
}

@article{Naghiloo2018,
  author  = {M. Naghiloo and J. J. Alonso and A. Romito and E. Lutz and
             K. W. Murch},
  title   = {Information Gain and Loss for a Quantum {M}axwell's Demon},
  journal = {Phys. Rev. Lett.},
  volume  = {121}, pages = {030604}, year = {2018}
}

@article{Camati2016,
  author  = {P. A. Camati and J. P. S. Peterson and T. B. Batalh{\~a}o and
             K. Micadei and A. M. Souza and R. S. Sarthour and I. S. Oliveira
             and R. M. Serra},
  title   = {Experimental Rectification of Entropy Production by {M}axwell's
             Demon in a Quantum System},
  journal = {Phys. Rev. Lett.},
  volume  = {117}, pages = {240502}, year = {2016}
}

@article{Seah2020,
  author  = {S. Seah and S. Nimmrichter and V. Scarani},
  title   = {Maxwell's Lesser Demon: A Quantum Engine Driven by Pointer
             Measurements},
  journal = {Phys. Rev. Lett.},
  volume  = {124}, pages = {100603}, year = {2020}
}

@article{Buffoni2019,
  author  = {L. Buffoni and A. Solfanelli and P. Verrucchi and A. Cuccoli and
             M. Campisi},
  title   = {Quantum Measurement Cooling},
  journal = {Phys. Rev. Lett.},
  volume  = {122}, pages = {070603}, year = {2019}
}

@article{Manzano2018,
  author  = {G. Manzano and F. Plastina and R. Zambrini},
  title   = {Optimal Work Extraction and Thermodynamics of Quantum Measurements
             and Correlations},
  journal = {Phys. Rev. Lett.},
  volume  = {121}, pages = {120602}, year = {2018}
}

@article{Strasberg2017,
  author  = {P. Strasberg and G. Schaller and T. Brandes and M. Esposito},
  title   = {Quantum and Information Thermodynamics: A Unifying Framework Based
             on Repeated Interactions},
  journal = {Phys. Rev. X},
  volume  = {7}, pages = {021003}, year = {2017}
}

@article{Goold2016,
  author  = {J. Goold and M. Huber and A. Riera and L. del Rio and P. Skrzypczyk},
  title   = {The Role of Quantum Information in Thermodynamics---a Topical Review},
  journal = {J. Phys. A},
  volume  = {49}, pages = {143001}, year = {2016}
}

@article{Dicke1954,
  author  = {R. H. Dicke},
  title   = {Coherence in Spontaneous Radiation Processes},
  journal = {Phys. Rev.},
  volume  = {93}, pages = {99}, year = {1954}
}

@article{Garraway2011,
  author  = {B. M. Garraway},
  title   = {The {D}icke Model in Quantum Optics: {D}icke Model Revisited},
  journal = {Philos. Trans. R. Soc. A},
  volume  = {369}, pages = {1137}, year = {2011}
}

@article{Baumann2010,
  author  = {K. Baumann and C. Guerlin and F. Brennecke and T. Esslinger},
  title   = {Dicke Quantum Phase Transition with a Superfluid Gas in an Optical
             Cavity},
  journal = {Nature (London)},
  volume  = {464}, pages = {1301}, year = {2010}
}

@article{Molmer1999,
  author  = {K. M{\o}lmer and A. S{\o}rensen},
  title   = {Multiparticle Entanglement of Hot Trapped Ions},
  journal = {Phys. Rev. Lett.},
  volume  = {82}, pages = {1835}, year = {1999}
}

@article{Leibfried2003,
  author  = {D. Leibfried and R. Blatt and C. Monroe and D. Wineland},
  title   = {Quantum Dynamics of Single Trapped Ions},
  journal = {Rev. Mod. Phys.},
  volume  = {75}, pages = {281}, year = {2003}
}

@article{Monroe2021,
  author  = {C. Monroe and W. C. Campbell and L.-M. Duan and Z.-X. Gong and
             A. V. Gorshkov and P. W. Hess and R. Islam and K. Kim and
             N. M. Linke and G. Pagano and P. Richerme and C. Senko and
             N. Y. Yao},
  title   = {Programmable Quantum Simulations of Spin Systems with Trapped Ions},
  journal = {Rev. Mod. Phys.},
  volume  = {93}, pages = {025001}, year = {2021}
}

@article{Kim2010,
  author  = {K. Kim and M.-S. Chang and S. Korenblit and R. Islam and
             E. E. Edwards and J. K. Freericks and G.-D. Lin and L.-M. Duan
             and C. Monroe},
  title   = {Quantum Simulation of Frustrated {I}sing Spins with Trapped Ions},
  journal = {Nature (London)},
  volume  = {465}, pages = {590}, year = {2010}
}

@article{Bruzewicz2019,
  author  = {C. D. Bruzewicz and J. Chiaverini and R. McConnell and J. M. Sage},
  title   = {Trapped-Ion Quantum Computing: Progress and Challenges},
  journal = {Appl. Phys. Rev.},
  volume  = {6}, pages = {021314}, year = {2019}
}

@article{Johansson2013,
  author  = {J. R. Johansson and P. D. Nation and F. Nori},
  title   = {{QuTiP} 2: A {P}ython Framework for the Dynamics of Open Quantum
             Systems},
  journal = {Comput. Phys. Commun.},
  volume  = {184}, pages = {1234}, year = {2013}
}

@article{Erdman2022,
  author  = {P. A. Erdman and F. No{\'e}},
  title   = {Identifying Optimal Cycles in Quantum Thermal Machines with
             Reinforcement Learning},
  journal = {npj Quantum Inf.},
  volume  = {8}, pages = {1}, year = {2022}
}

@article{Nimmrichter2017,
  author  = {S. Nimmrichter and J. Dai and A. Roulet and V. Scarani},
  title   = {Quantum and Classical Dynamics of a Three-Mode Absorption
             Refrigerator},
  journal = {Quantum},
  volume  = {1}, pages = {37}, year = {2017}
}

@article{Campaioli2024,
  author  = {F. Campaioli and S. Gherardini and J. Q. Quach and M. Polini and
             G. M. Andolina},
  title   = {Colloquium: Quantum Batteries},
  journal = {Rev. Mod. Phys.},
  volume  = {96}, pages = {031001}, year = {2024}
}

@article{Gyhm2022,
  author  = {J.-Y. Gyhm and D. {\v S}afr{\'a}nek and D. Rosa},
  title   = {Quantum Charging Advantage Cannot Be Extensive without Global
             Operations},
  journal = {Phys. Rev. Lett.},
  volume  = {128}, pages = {140501}, year = {2022}
}

@article{Gherardini2020,
  author  = {S. Gherardini and F. Campaioli and F. Caruso and F. C. Binder},
  title   = {Stabilizing Open Quantum Batteries by Sequential Measurements},
  journal = {Phys. Rev. Research},
  volume  = {2}, pages = {013095}, year = {2020}
}

@article{Francica2020,
  author  = {G. Francica and F. C. Binder and G. Guarnieri and M. T. Mitchison
             and J. Goold and F. Plastina},
  title   = {Quantum Coherence and Ergotropy},
  journal = {Phys. Rev. Lett.},
  volume  = {125}, pages = {180603}, year = {2020}
}

@article{Yao2022,
  author  = {Y. Yao and X. Q. Shao},
  title   = {Optimal Charging of Open Spin-Chain Quantum Batteries via
             Homodyne-Based Feedback Control},
  journal = {Phys. Rev. E},
  volume  = {106}, pages = {014138}, year = {2022}
}

@article{Hu2022,
  author  = {C.-K. Hu and J. Qiu and P. J. P. Souza and J. Yuan and Y. Zhou and
             L. Zhang and J. Chu and X. Pan and L. Hu and J. Li and Y. Xu and
             Y. Zhong and S. Liu and F. Yan and D. Tan and R. Bachelard and
             C. J. Villas-Boas and A. C. Santos and D. Yu},
  title   = {Optimal Charging of a Superconducting Quantum Battery},
  journal = {Quantum Sci. Technol.},
  volume  = {7}, pages = {045018}, year = {2022}
}

@article{Santos2019,
  author  = {A. C. Santos and B. {\c C}akmak and S. Campbell and N. T. Zinner},
  title   = {Stable Adiabatic Quantum Batteries},
  journal = {Phys. Rev. E},
  volume  = {100}, pages = {032107}, year = {2019}
}

@article{Emary2003,
  author  = {C. Emary and T. Brandes},
  title   = {Chaos and the Quantum Phase Transition in the {D}icke Model},
  journal = {Phys. Rev. E},
  volume  = {67}, pages = {066203}, year = {2003}
}

@article{Francica2017,
  author  = {G. Francica and J. Goold and F. Plastina and M. Paternostro},
  title   = {Daemonic Ergotropy: Enhanced Work Extraction from Quantum
             Correlations},
  journal = {npj Quantum Inf.},
  volume  = {3}, pages = {12}, year = {2017}
}

@article{Mitchison2021,
  author  = {M. T. Mitchison and J. Goold and J. Prior},
  title   = {Charging a Quantum Battery with Linear Feedback Control},
  journal = {Quantum},
  volume  = {5}, pages = {500}, year = {2021}
}

\end{document}